\newtheorem{theorem}{Theorem}
\numberwithin{theorem}{section}
\newtheorem{proposition}[theorem]{Proposition}
\newtheorem{definition}[theorem]{Definition}
\begin{document}

\title{A Tropical Geometric Approach To Exceptional Points}

\author{Ayan Banerjee}
\affiliation{Solid State and Structural Chemistry Unit, Indian Institute of Science, Bengaluru 560012, India}
\author{Rimika Jaiswal}
\affiliation{Undergraduate Programme, Indian Institute of Science, Bengaluru 560012, India}
\affiliation{Department of Physics, University of California, Santa Barbara, California 93106-9530, USA}
\author{Madhusudan Manjunath}
\affiliation{Department of Mathematics, Indian Institute of Technology Bombay, Powai, Mumbai 400076, India}
\author{Awadhesh Narayan}
\email{awadhesh@iisc.ac.in}
\affiliation{Solid State and Structural Chemistry Unit, Indian Institute of Science, Bengaluru 560012, India}

\date{\today}

\begin{abstract}
Non-Hermitian systems have been widely explored in platforms ranging from photonics to electric circuits. A defining feature of non-Hermitian systems is exceptional points (EPs), where both eigenvalues and eigenvectors coalesce. Tropical geometry is an emerging field of mathematics at the interface between algebraic geometry and polyhedral geometry, with diverse applications to science. Here, we introduce and develop a unified tropical geometric framework to characterize different facets of non-Hermitian systems. We illustrate the versatility of our approach using several examples, and demonstrate that it can be used to select from a spectrum of higher-order EPs in gain and loss models, predict the skin effect in the non-Hermitian Su-Schrieffer-Heeger model, and extract universal properties in the presence of disorder in the Hatano-Nelson model. Our work puts forth a new framework for studying non-Hermitian physics and unveils a novel connection of tropical geometry to this field.  
\end{abstract}
\maketitle

\section{Introduction}
\setlength{\parskip}{0.5\baselineskip}

Several branches of mathematics show an \emph{unreasonable effectiveness} in formulating and understanding a myriad of physical phenomena~\cite{wigner1990unreasonable}. Striking recent examples include the role of topology in condensed matter systems~\cite{haldane2017nobel,kosterlitz2017nobel}, advent of knot theory in quantum field theory~\cite{witten1989quantum}, and applications of graph theory in statistical mechanics~\cite{essam1971graph}.

Tropical geometry is a branch of modern mathematics at the interface between algebraic geometry and polyhedral geometry~\cite{maclagan2009introduction,mikhalkin2009tropical}. The tropical approach has not only had applications to geometry, but also to areas such as physics, number theory, genetics, economics, optimization theory, and computational biology~\cite{brugalle2014bit,kapranov2011thermodynamics,samal2015geometric,noel2012tropical}. Notable has been the role of tropical geometry in understanding physical systems. Deep connections of tropical geometry to string theory have been discovered~\cite{kontsevich2001homological,gross2011tropical}, while tropical algebra has been used to analyze frustrated systems such as spin ice and spin glasses~\cite{cimasoni2007dimers}. Another recent successful application of tropical ideas has been in understanding self-organized criticality in dynamical systems~\cite{kalinin2018self}. Tropical geometric tools such as the logarithmic transformation offer drastic computational simplification, and, interestingly, the low-temperature limit of statistical physics can be studied in terms of such a tropical mapping~\cite{kenyon2006dimers,kapranov2011thermodynamics}.

Hermiticity of operators is a central principle in quantum mechanics, ensuring that a system has real eigenenergies and orthogonal eigenstates, and leads to the conservation of probability~\cite{dirac1981principles}. In recent decades the notion of non-Hermiticity has been introduced in a variety of physical contexts~\cite{bender1998real,bergholtz2021exceptional,ashida2020non}. A unique feature of non-Hermitian systems are degeneracies called \emph{exceptional points} (EPs), where both eigenvalues and eigenvectors coalesce~\cite{kato2013perturbation}. The energy level-splitting, $\Delta\lambda$, upon moving away from an EP follows a distinctive fractional dependence on the perturbation. An $N$-th order EP [EP-$N$, where two or more eigenvectors $(N\geq 2)$ coalesce] shows a splitting of the form $\Delta \lambda \sim \nu^{1/N}$, where $\nu$ is an external perturbation~\cite{hodaei2017enhanced,tang2020exceptional}. Recent advances have led to controllable realization of EPs in a variety of platforms~\cite{miri2019exceptional,ozdemir2019parity,naghiloo2019quantum,cerjan2019experimental,xu2017weyl,choi2018observation,stehmann2004observation}. Their control has enabled the exploration of novel phenomena, such as uni-directional sensitivity~\cite{lin2011unidirectional,peng2014parity}, laser mode selectivity~\cite{feng2014single,hodaei2014parity}, and non-Hermitian skin effect (NHSE)~\cite{yao2018edge}.

In this work, we propose and develop a general tropical geometric framework for understanding and characterizing various facets of non-Hermitian systems. We demonstrate that the tropical geometric information encoded in the characteristic polynomial of the non-Hermitian Hamiltonian can be used to identify and classify EPs using \emph{valuation} and \emph{tropical roots} -- concepts that naturally emerge in the tropical setting. We show that EPs of different orders and their transitions can be captured in an elegant manner by \emph{amoebas} and Newton polygons. We illustrate our framework using experimentally-realized gain and loss models, and show how it allows obtaining a higher-order EP or choosing from a spectrum of EPs. Using the paradigmatic non-Hermitian Su-Schrieffer-Heeger (SSH) model, we demonstrate how our tropical geometric approach can be used to predict the NHSE. Our approach naturally allows extracting the universal properties of EPs in the presence of disorder, which we highlight using the celebrated Hatano-Nelson model. Our framework allows a unified approach to different facets of non-Hermitian phenomena, including EPs, NHSE, and holonomy.

\begin{figure}	
	\includegraphics[width=0.83\linewidth]{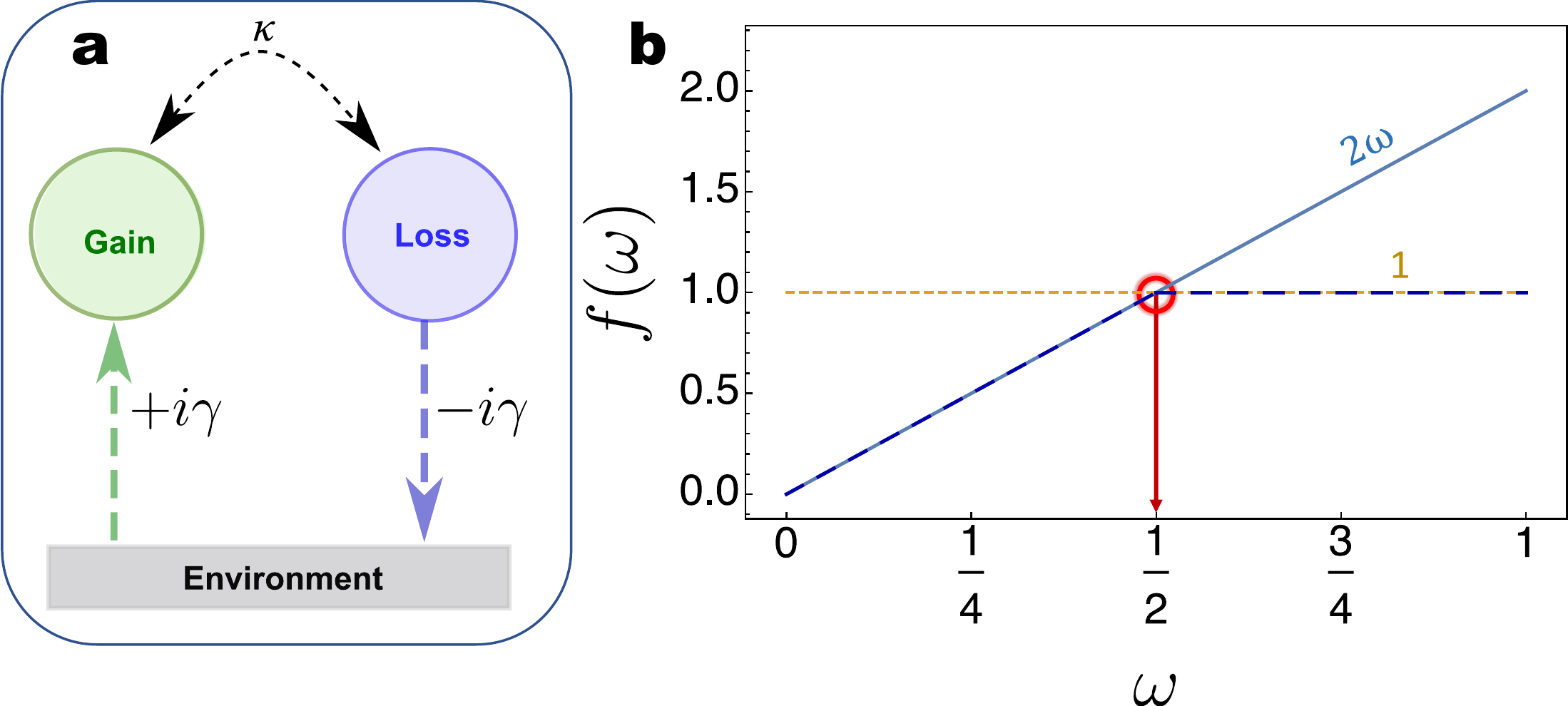}
	\caption{\textbf{Illustration of the tropical geometric framework with two site gain and loss model.} (a) Schematic of the two site gain and loss model with $\gamma$ as the gain and loss parameter and $\kappa$ as the coupling between the sites. (b) Using tropicalization to find the order of EP. The tropical polynomial contains different linear monomials with integer coefficients (see Eq.~\ref{trop-two}). The bend locus of the monomials (encircled in red) gives the tropical root $\omega_0=1/2$ implying a second order EP.}
	\label{two-site}	
 \end{figure}

\subsection{Tropical characterization of exceptional points}
 
\textbf{Basics of tropical geometry.}  We begin by briefly summarizing the fundamental ideas of tropical geometry (see supplemental material~\cite{supplement} for a detailed discussion). Broadly speaking, tropical geometry studies solutions of systems of polynomials by transforming them into piece-wise linear subsets of Euclidean space~\cite{maclagan2021introduction}. The basic algebraic object underlying tropical geometry is the \emph{tropical semiring}, $(\mathbb{R}\cup \{\infty\},\oplus,\odot)$. This denotes a set that is the union of the set of real numbers $\mathbb{R}$, together with an element ``infinity'', and two operations on it, namely tropical addition $\oplus$ and tropical multiplication $\odot$. The tropical sum of two numbers is their minimum and the tropical product is their usual sum,

\begin{equation}
    x \oplus y= \mathrm{min}(x,y), \quad  x \odot y = x+y.
\end{equation}

Many of the usual axioms of arithmetic remain valid in the tropical setting. These operations satisfy all the ring axioms except for the existence of an additive inverse and thus turn $(\mathbb{R}\cup \{\infty\},\oplus,\odot)$ into a semiring.

\textbf{Defining order of exceptional points.} In the following, we define the notion of order of an EP of a non-Hermitian system. To the best of our knowledge, this definition is consistent with the literature on this topic. Let $H(\nu)$ be the Hamiltonian of a non-Hermitian system in one variable $\nu$ with an EP at $\nu=0$ and let $p(\nu,\lambda) \in \mathbb{C}[\nu, \lambda]$ be its characteristic polynomial.  
In the following, we regard $p$ as a polynomial in one variable $\lambda$ and with coefficients in the field $\mathbb{C}\{\{\nu\}\}$ of Puiseux series. 

\begin{definition}
Let $p \in \mathbb{C}\{\{\nu\}\}[\lambda]$ have at least one non-zero root.  Suppose that $p$ has a non-trivial Puiseux series root, i.e. a root  $s$ such that  the least exponent of $s$ is non-zero.  In this case, the order of this EP (at $\nu=0$) is the maximum absolute value of the denominator $n$ of $m/n \in \mathbb{Q}$ (in reduced form) where $m/n$ varies over the least exponent in the Puiseux series expansion over all the non-trivial roots of $p$. Otherwise, if all the roots of $p$ have zero as their least exponent, then $\nu=0$ is called a degenerate point. 
\end{definition}

Consider a system at an EP-$N$ (at $\nu=0$) given by the Hamiltonian $H_0(x_1, x_2, ...)$ where $x_1, x_2...$ are system-dependent parameters. When we perturb this system around the EP, the eigenvalues of the perturbed Hamiltonian $H(\nu)=H_0+\nu H_1$ follow a Puiseux series in $\nu$,

\begin{equation}
    \lambda(\nu)= \gamma_1 \nu^{1/N} + \gamma_2 \nu^{2/N}+...,
    \label{puiseux}
\end{equation}

where $\nu$ is the perturbation strength. To leading order, the response goes as $\Delta\lambda_{EP-N} \propto \nu^{1/N}$. Our tropical geometric approach features a characterization of EPs by determining such leading order behavior.

\textbf{Characterizing exceptional points using tropical geometry.} Next, we present the tropical geometric framework that can be used to reveal the structure of EPs and characterize as well as tune them in various physical platforms. 

For a field $\mathbb{K}$, a valuation on $\mathbb{K}$ is defined as a function $\text{val} :\mathbb{K}\rightarrow \mathbb{R}  \cup \{\infty\}$ such that:
    \begin{itemize}
        \item $\text{val}(a) = \infty$ if and only if $a = 0$;
        \item $\text{val} ( ab ) = \text{val} ( a ) + \text{val} ( b )$ ;
        \item $\text{val} ( a + b ) \geq \text{min} \{ \text{val} ( a ) , \text{val} ( b )\}$ for all $a, b \in \mathbb{K}$.
    \end{itemize}
In our framework, we primarily deal with the field of Puiseux series with coefficients in the complex numbers $\mathbb{C}$. This field has a natural valuation which is given by taking a non-zero Puiseux series to the lowest exponent that appears in its expansion. For example, $\text{val}(t^2-2t+3)=\text{min}\{ \text{val}(t^2), \text{val}(-2t), \text{val}(3) \} = \text{min} \{ 2,1,0 \} = 0$ and $\text{val}(t^{1/2}-t^{3/4}+t^1+t^2+\dots)=1/2$.

In its most basic form, tropical geometry gives a method to compute the valuations of the non-zero roots of a non-zero polynomial $p \in \mathbb{K}[\lambda]$ in terms of the valuations of the coefficients of $p$. More precisely, given a non-zero polynomial $p=\sum_{i=0}^{d} a_i \lambda^i \in \mathbb{K}[\lambda]$, its tropicalization ${\rm trop}(p): \mathbb{R} \rightarrow \mathbb{R}$ is defined as ${\rm trop}(p)(\omega)={\rm min_i}\{{\rm val}(a_i)+i \cdot \omega \}$. 

A real number $\omega_0$ is called a \emph{tropical root} of ${\rm trop}(p)$ if the minimum defining ${\rm trop}(p)(\omega_0)$ is attained by at least two distinct terms ${\rm val}(a_j)+j \cdot \omega_0$ and ${\rm val}(a_k)+k \cdot \omega_0$ for $j \neq k$. Equivalently, the tropical roots of ${\rm trop}(p)$ precisely are the real numbers where ${\rm trop}(p)$ is not differentiable, called the \emph{bend locus} of ${\rm trop}(p)$.

The \emph{fundamental theorem of tropical geometry} asserts that the set of tropical roots of ${\rm trop}(p)$ is precisely the set of valuations of the non-zero roots of $p$~\cite[Chapter 3, Section 2]{maclagan2021introduction}. This leads us to one of the main propositions of our framework. For a non-Hermitian Hamiltonian, $H(\nu)$, with a characteristic polynomial $p(\nu,\lambda) \in \mathbb{C}[\nu, \lambda]$, as described before, $p$ can be regarded as an element in $\mathbb{C}\{\{\nu\}\}[\lambda]$, where $\mathbb{C}\{\{\nu\}\}$ is equipped with its standard valuation that takes a non-zero Puiseux series $s$ to the exponent of the leading order term of $s$.

\begin{proposition}
Suppose that ${\rm trop}(p(\nu,\lambda))$ has a non-zero tropical root. The order of the EP at $\nu=0$ of $H(\nu)$ is the maximum absolute value of the denominator $n$ of $m/n$ (in reduced form) where $m/n$ varies over all the non-zero tropical roots of ${\rm trop}(p(\nu,\lambda))$. 
Otherwise, if ${\rm trop}(p(\nu,\lambda))$ has no non-zero tropical roots, then $\nu=0$
is a degenerate point.  
\end{proposition}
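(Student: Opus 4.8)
The plan is to show that the Proposition is simply the Definition rewritten through the Fundamental Theorem of Tropical Geometry, so the entire argument reduces to matching the two sets of rational numbers that enter each statement. First I would record the standing fact that the field $\mathbb{C}\{\{\nu\}\}$ of Puiseux series is algebraically closed (the Newton--Puiseux theorem), so that $p$, viewed as an element of $\mathbb{C}\{\{\nu\}\}[\lambda]$, splits completely and each of its roots is a genuine Puiseux series. For such a root $s \neq 0$, the ``least exponent in the Puiseux series expansion of $s$'' appearing in the Definition is, by the very definition of the standard valuation, nothing other than $\mathrm{val}(s)$. Thus the set of least exponents of the non-zero roots of $p$ is exactly $\{\mathrm{val}(s) : p(s)=0,\ s\neq 0\}$.

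Next I would invoke the Fundamental Theorem of Tropical Geometry, quoted above, which identifies this set of valuations with the set of tropical roots of $\mathrm{trop}(p)$, i.e. with the bend locus of the piecewise-linear function $\omega \mapsto \min_i\{\mathrm{val}(a_i)+i\omega\}$. The key dictionary is then immediate: a root $s$ is \emph{non-trivial} in the sense of the Definition (its least exponent is non-zero) precisely when $\mathrm{val}(s)\neq 0$, which by the Fundamental Theorem holds precisely when the corresponding tropical root is non-zero. Consequently the set of least exponents of the non-trivial roots of $p$ and the set of non-zero tropical roots of $\mathrm{trop}(p)$ are one and the same set of rational numbers.

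With this identification the two formulas coincide termwise. Writing each element of this common set in reduced form $m/n$ and taking the maximum of $|n|$ yields, on the one hand, the order of the EP as defined in the Definition and, on the other hand, the quantity asserted in the Proposition; hence the two agree whenever a non-zero element is present. For the remaining case, $\mathrm{trop}(p)$ has no non-zero tropical root exactly when every non-zero root of $p$ has valuation $0$, that is, when $p$ has no non-trivial root; by the Definition this is precisely the condition that $\nu=0$ be a degenerate point. The two alternatives of the Proposition therefore match the two alternatives of the Definition.

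I do not anticipate a deep obstacle: the content is entirely carried by the Fundamental Theorem, and the rest is bookkeeping. The one point that deserves care is that the Fundamental Theorem is a statement about \emph{sets} of valuations rather than about roots counted with multiplicity, so I must check that the order depends only on which exponents occur and not on how many roots realize a given exponent. This is indeed the case, since the order is defined as a maximum of denominators over the set of occurring least exponents, making multiplicities irrelevant. I must also keep in mind the standing hypothesis that $p$ has at least one non-zero root, under which $\mathrm{trop}(p)$ possesses at least one tropical root and the degenerate alternative is correctly identified with the absence of a \emph{non-zero} one.
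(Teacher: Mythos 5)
Your proposal is correct and follows essentially the same route as the paper: invoke the fundamental theorem of tropical geometry to identify the set of valuations of the non-zero Puiseux series roots of $p$ with the set of tropical roots of ${\rm trop}(p)$, and then observe that the Definition of the order of an EP is stated precisely in terms of those valuations (the least exponents). Your additional remarks on algebraic closedness, the irrelevance of multiplicities, and the matching of the degenerate alternatives are careful elaborations of what the paper leaves implicit, but they do not change the argument.
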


\begin{proof}
By the fundamental theorem of tropical geometry \cite[Chapter 3, Section 2]{maclagan2021introduction}, the set of tropical roots of ${\rm trop}(p)$ is precisely the set $\{{\rm val}(s)\}_s$  where $s$ varies over all the non-zero Puiseux series solutions of $p(\nu,\lambda) \in \mathbb{C}\{\{\nu\}\}[\lambda]$. With this information at hand, the statement follows from the definition of the order of an EP. 
\end{proof}

To simply illustrate our framework, we consider an experimentally realizable non-Hermitian system consisting of two coupled sites with gain and loss (see Fig.~\ref{two-site}a). The Hamiltonian reads

\begin{equation}
H_2=\begin{pmatrix}
 \alpha + i \gamma & \kappa\\ 
  \kappa & -\alpha-i \gamma
\end{pmatrix}.
\end{equation}

Here $\alpha$ quantifies the onsite energies, $\gamma$ is the corresponding gain/loss coefficient, and $\kappa$ is the coupling between the sites. This system has an EP at $\alpha=0$ if $\gamma=\kappa$. The characteristic polynomial and the corresponding tropicalization for $\gamma=\kappa$ are

\begin{align}
    & p(\alpha,\lambda)= -2i\kappa \alpha - \alpha^{2} + \lambda^{2}, \\
    & \text{trop}\left( p(\alpha,\lambda) \right)(\omega) = \text{min} \left( 1, 2\omega \right).
    \label{trop-two}
\end{align}

The root of the tropical polynomial is given by the bend locus of $\text{trop}(p(\alpha,\lambda))(\omega)$ which occurs at $\omega_0=1/2$, as shown in Fig.~\ref{two-site}b. Using the fundamental theorem of tropical geometry, we then conclude that $p(\alpha,\lambda)$ has a non-zero root with valuation $s=1/2$. This implies that the roots of $p(\alpha,\lambda)$, i.e., the eigenvalues of $H$ have the form $\lambda \sim \alpha^{1/2}$ near the EP at $\alpha=0$. Thus, the EP at $\alpha=0$ is a second-order EP (see the supplement~\cite{supplement} for a detailed discussion). Further, in the supplement~\cite{supplement}, we use tropicalization to illustrate how our framework provides a natural way to characterize and tune to higher-order EPs using companion matrices.

\begin{figure}	
	\includegraphics[width=0.6\linewidth]{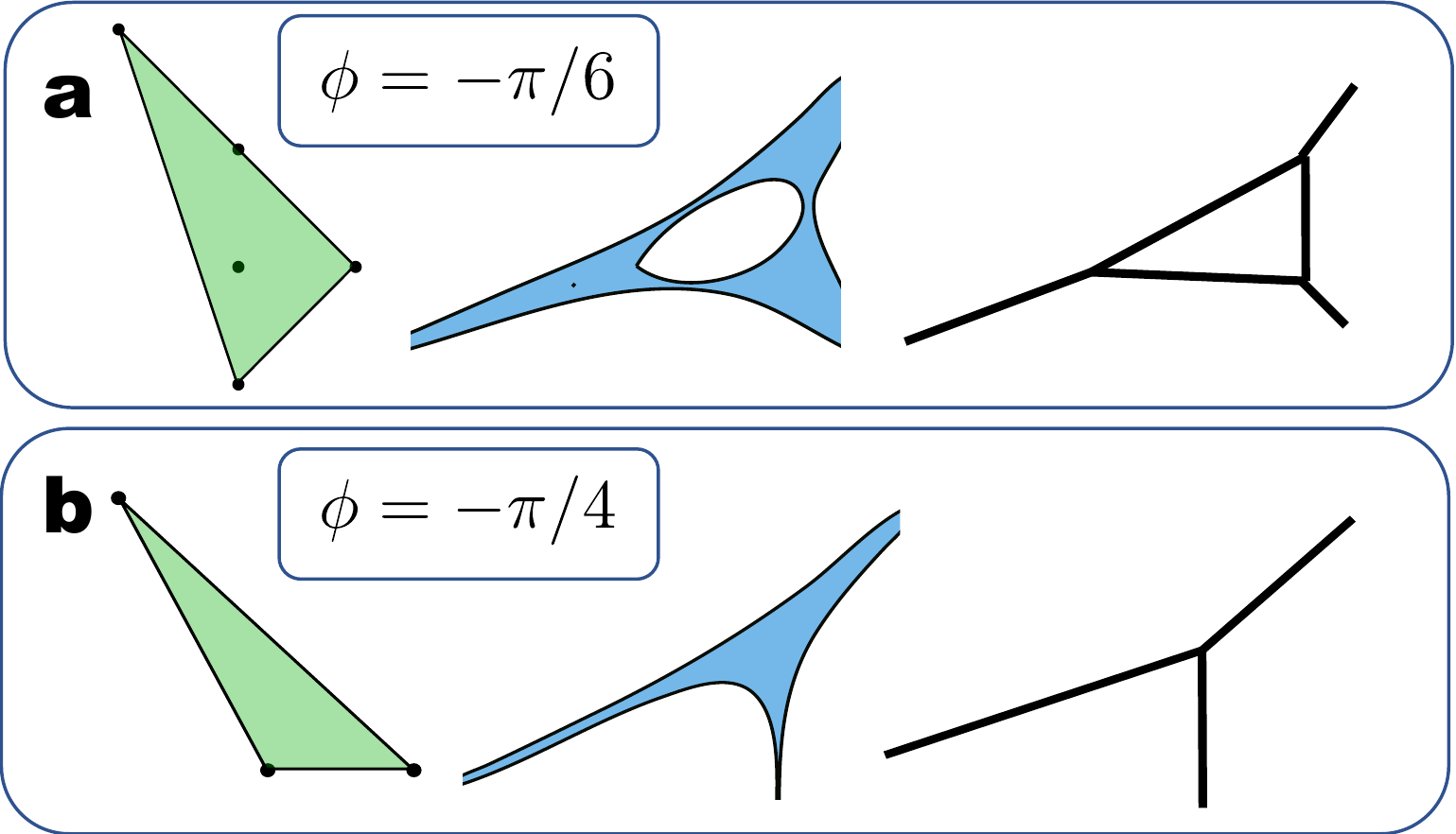}
	\caption{\textbf{Characterization of exceptional points through amoebas in a three-site gain and loss model.} Realization of Newton polygon (left), amoeba (center), and the spine of the amoeba for (a) $\phi=-\pi/6$ and (b) $\phi=-\pi/4$. The convex slope of the Newton polygon defines the order of EPs. We obtain third-order EPs in (a) and second-order EPs in (b). The interior point in the Newton polygon in (a) results in a vacuole in the amoeba. The amoeba structures abruptly change from (a) to (b) while transitioning from third-order to second-order EPs. We set $\gamma = \sqrt{2} \kappa$ and $\kappa=1.0$. }
	 \label{three-site-newton-amoeba}	
\end{figure}

\textbf{Relation to amoebas and Newton polygons.} Above, we saw how tropical geometry can be used to determine the order of EPs. A precursor to tropical geometry is a construction called the \emph{amoeba of a complex algebraic variety} due to Gelfand, Kapranov and Zelevinsky \cite{gelfand1994discriminants}. Let $V \subseteq (\mathbb{C}^\star)^n$ be the set of solutions, all of whose coordinates are non-zero, of a finite set of Laurent polynomials in $n$ variables. Let ${\rm Log}: (\mathbb{C}^\star)^n \rightarrow \mathbb{R}^n$ be the logarithmic map that takes $(z_1,\dots,z_n)$ to $(\log (|z_1|),\dots,\log(|z_n|))$. The amoeba of $V$ is the image of the logarithmic map restricted to $V$. A related and important notion is the \emph{spine of the amoeba} and is defined as the limit as $t \rightarrow \infty$ of the parameterized logarithmic map ${\rm Log}_t(z_1,\dots,z_n)=(\log_t (|z_1|),\dots,\log_t(|z_n|))$. In the Methods section we present the connection of amoeba to tropicalization.

\begin{figure}	
	\includegraphics[width=0.85\linewidth]{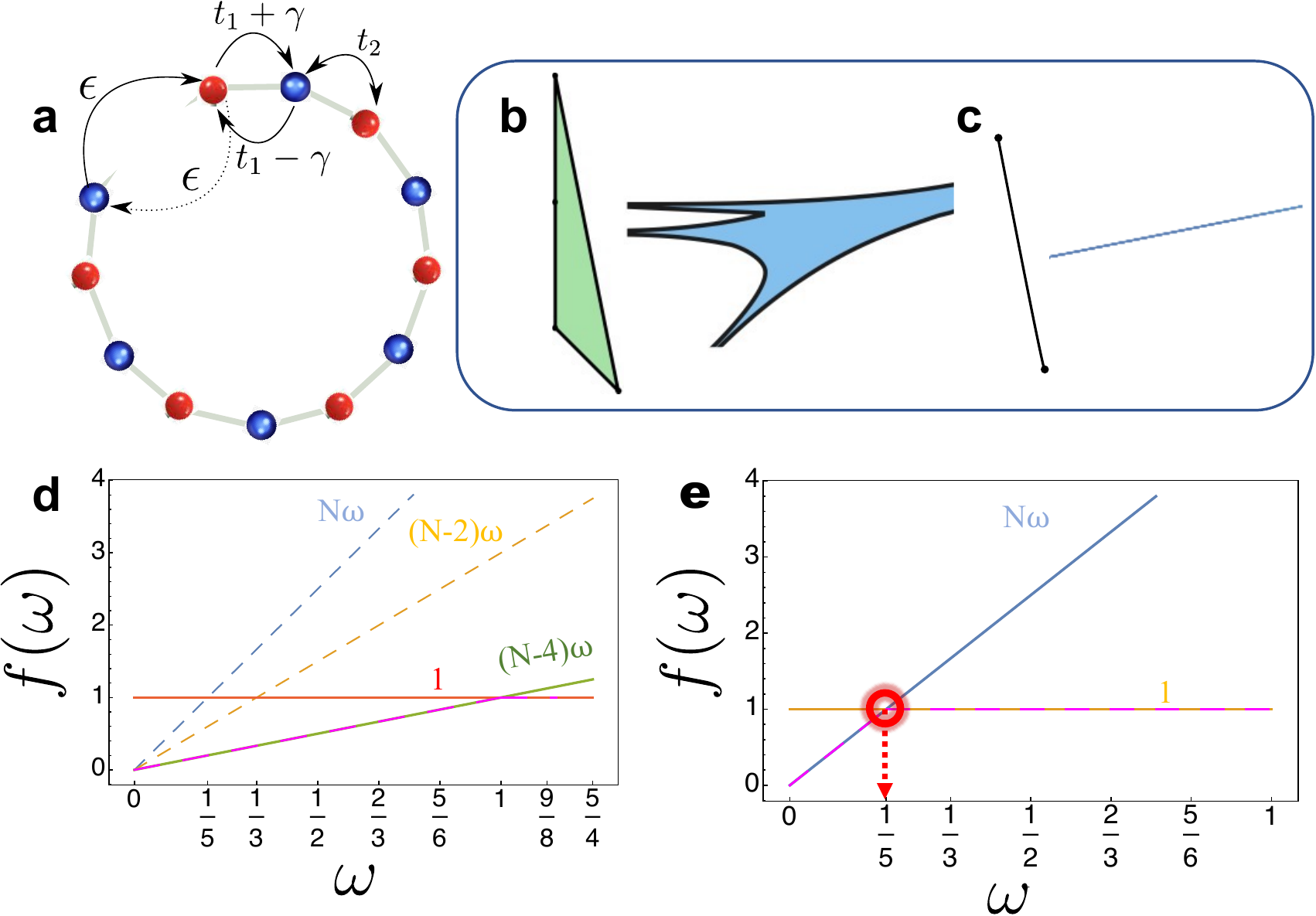}
	\caption{\textbf{Detecting skin effect in the non-Hermitian SSH model with higher-order EPs via tropical geometry} (a) Schematic of SSH model with non-reciprocal hopping and a weak link connecting the last site to the first. The inter-unit cell hopping is a constant $t_2$, but the left and right intra-unit cell hoppings are given by $t \pm \gamma$ incorporating non-Hermiticity in the system. (b) Newton polygons and the concomitant amoebas for SSH model with odd number of sites. Here we choose $t_1=2.0$, $\gamma=1.0$ and $t_2=1.0$. The structure is similar for all values $t_1 \neq \gamma$.  (c) At $t_1=\gamma$ the Newton polygon abruptly transforms to a single line with slope $1/N$ at $t_1=\gamma$, indicative of a higher order EP and the skin effect. The amoeba collapses to a single line perpendicular to the Newton polygon, characterizing the non-Hermitian phase transition. Panel (d) shows the tropicalization for the general case corresponding to Eq.~\ref{ssh-tropicalization}. Each straight line represents a term in Eq.~\ref{ssh-tropicalization}. (e) Tropicalization for the case of $t_1=\gamma$, wherein the coefficients of all the points corresponding to $\lambda^M$ vanish, other than $M = 0, N$. The bend locus gives the tropical root $\omega_0=1/N$, which indicates the presence of an $N$-th order EP and, correspondingly, the occurrence of a non-Hermitian skin effect. Here we choose $N=5$.}
	\label{ssh-figure}	
\end{figure}

We are primarily concerned with amoebas of polynomials in two variables with complex coefficients, namely characteristic polynomials of a non-Hermitian Hamilotian in one variable. Typical examples of such amoebas are shown in Fig.~\ref{three-site-newton-amoeba}, which will be discussed shortly. The amoeba of a typical polynomial contains (unbounded) rays that are called its \emph{tentacles}. We recall that the Newton polygon of $p$ is the convex hull of the exponents of the monomials in the support of $p$. The following proposition that relates the edges of the Newton polygon of $p$ and the amoeba (of the algebraic variety) associated to $p$ is of fundamental importance to our framework. 

\begin{proposition}
The set of directions of the tentacles of the amoeba associated to $p$ is precisely the set of outer normals of the edges of the Newton polygon of $p$.
\end{proposition}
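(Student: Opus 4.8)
The plan is to identify the tentacle directions of the amoeba with the asymptotic (recession) directions of its support, and then to show by a direct domination argument that these are governed by which face of the Newton polygon a given linear functional selects. Write $p=\sum_{a\in A}c_a z^a$ with support $A\subseteq\mathbb{Z}^2$ and Newton polygon $\Delta=\mathrm{conv}(A)$ (which I may assume to be two-dimensional). For a point $z\in(\mathbb{C}^\star)^2$ with $\mathrm{Log}(z)=w\in\mathbb{R}^2$, the modulus of the monomial indexed by $a$ is $|c_a|\,e^{\langle a,w\rangle}$, so the competition among monomials along a ray $w=t u$ is controlled by the linear functional $a\mapsto\langle a,u\rangle$. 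A unit vector $u$ is a tentacle direction precisely when $\mathcal{A}(p)$ contains a sequence $w^{(k)}$ with $\|w^{(k)}\|\to\infty$ and $w^{(k)}/\|w^{(k)}\|\to u$; I will show this happens if and only if $u$ is the outer normal of an edge of $\Delta$.

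For the necessity direction, suppose $u$ is a tentacle direction and consider the face $\Delta_u$ of $\Delta$ on which the functional $a\mapsto\langle a,u\rangle$ attains its maximum. For $w$ far out in direction $u$, the monomials whose exponents lie in $\Delta_u$ strictly dominate all others in modulus. If $\Delta_u$ were a single vertex, then $p(z)$ would be asymptotic to one nonvanishing monomial and could not be zero, so no zeros of $p$ escape to infinity near direction $u$. Hence $\Delta_u$ must contain at least two points of $A$, i.e. $\Delta_u$ is an edge of $\Delta$ and $u$ is a positive multiple of its outer normal. Equivalently, since the normal cone of a vertex of a planar polygon is two-dimensional while that of an edge is a single ray, the amoeba recedes to infinity only along the finitely many edge-normal rays that separate the vertex sectors.

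For the sufficiency direction, let $E$ be an edge of $\Delta$ with primitive direction vector $v$ (so $\langle u,v\rangle=0$) and outer normal $u$. Applying a monomial change of coordinates in $GL_2(\mathbb{Z})$ that sends $v$ to $(1,0)$, the terms of $p$ supported on $E$ collect into an extreme power of the second coordinate times a univariate polynomial $q$ whose constant and leading coefficients are the two (nonzero) vertex coefficients of $E$; hence $q$ has a root $\zeta\in\mathbb{C}^\star$. As the second coordinate is sent to infinity in the orientation selected by $u$, the edge terms dominate and $p$ reduces at leading order to $q$ in the first coordinate. I would then upgrade this approximate root to an exact branch of $V(p)$ by Rouch\'e's theorem, or by the implicit function theorem at a simple root $\zeta$, producing genuine zeros of $p$ whose first coordinate tends to $\zeta$ while the second escapes to infinity; their $\mathrm{Log}$-images run to infinity along the ray of direction $u$, realizing the tentacle.

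The main obstacle is this last approximate-to-exact step: one must guarantee that the leading balance furnished by the edge root persists as an actual zero of the full polynomial and tracks the correct asymptotic ray, including when $\zeta$ is a multiple root of $q$. A clean alternative that sidesteps the analytic estimates is to read the reduced univariate polynomial over the field of Puiseux (or Laurent) series in the coordinate conjugate to $u$ and invoke the fundamental theorem of tropical geometry already established above: its tropical roots have exactly the slopes dual to the edges of $\Delta$, and the $\mathrm{Log}_t$ limit defining the spine identifies these non-archimedean slopes with the asymptotic directions of the archimedean amoeba. Either route yields the two inclusions and hence the claimed equality of sets.
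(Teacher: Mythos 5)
The paper offers no proof of this proposition: it simply defers to Proposition~1.9 of Gelfand--Kapranov--Zelevinsky and Section~1.4 of Maclagan--Sturmfels for a more general version. Your argument is essentially the standard proof found in those sources, and both halves are sound. For necessity, the observation that a direction $u$ in the open normal cone of a vertex forces a single monomial $c_{a_0}z^{a_0}$ to dominate strictly along points with $\|w^{(k)}\|\to\infty$ and $w^{(k)}/\|w^{(k)}\|\to u$ (so that $p$ cannot vanish there) correctly confines the recession directions of the amoeba to the finitely many edge-normal rays. For sufficiency, collecting the edge terms into a univariate polynomial $q$ whose extreme coefficients are the nonzero vertex coefficients, picking a root $\zeta\in\mathbb{C}^\star$, and applying Rouch\'e's theorem on a small circle about $\zeta$ as the conjugate coordinate escapes to infinity does produce genuine zeros of $p$ whose ${\rm Log}$-images run off along the ray of direction $u$; since Rouch\'e counts zeros with multiplicity, your worry about $\zeta$ being a multiple root of $q$ is indeed harmless. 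What each approach buys: the paper's citation is economical but opaque, while your argument makes visible exactly why vertices of the Newton polygon contribute nothing and edges contribute a tentacle, which is the mechanism the paper exploits physically when it reads off EP orders from edge slopes.

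Two caveats. First, your blanket assumption that the Newton polygon is two-dimensional excludes precisely the degenerate case the paper relies on (the SSH model at $t_1=\gamma$, where the polygon collapses to a segment and the amoeba to a single line); that case must be treated separately, though it is easy --- the variety is then a union of translates of a one-parameter subgroup and the segment should be read as an edge with two outer normals. Second, the proposed ``clean alternative'' via the fundamental theorem of tropical geometry is not actually a shortcut: that theorem lives over a non-archimedean field, and identifying its tropical roots with the asymptotic directions of the \emph{archimedean} amoeba requires a dequantization or spine argument that is nowhere established in the paper. Rely on the Rouch\'e route.
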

We refer to Proposition 1.9 ~\cite{gelfand1994discriminants} and Section 1.4~\cite{maclagan2021introduction} for a more general version of this proposition.

We illustrate this proposition using a three-site non-Hermitian trimer model with balanced gain and loss and an asymmetric onsite potential. The Hamiltonian for the trimer is

\begin{equation}
H_3=\begin{pmatrix}
 \alpha + i \gamma & \kappa & 0\\ 
 \kappa & 0 & \kappa\\
 0& \kappa & \beta-i \gamma
\end{pmatrix},
\end{equation}

where the different symbols have a meaning analogous to the two-site model. We use the transformation $\beta= \alpha \tan{\phi}$ to scan all angles in the $\alpha$-$\beta$ parameter plane. Using the formalism developed above, we can find the tropical roots to reveal the nature of EPs. In Fig.~\ref{three-site-newton-amoeba} we illustrate the amoebas and the concomitant Newton polygons for various $\phi$. Note that the steepest slope of the Newton polygon $\Delta$ determines the order of the EPs. Interestingly, the integer points of the Newton polygon correspond to the vacuole in the amoeba (see Fig.~\ref{three-site-newton-amoeba}a). The structure of the amoeba drastically transforms from $\phi=-\pi/4$ to $\pi/4$ while the tropical roots change from $1/2$ to $1/3$ with a transition from second-order to third-order EPs. Therefore, the structure of the amoeba can be directly used to identify the various non-Hermitian phases.

\begin{figure*}
 \includegraphics[scale=0.4]{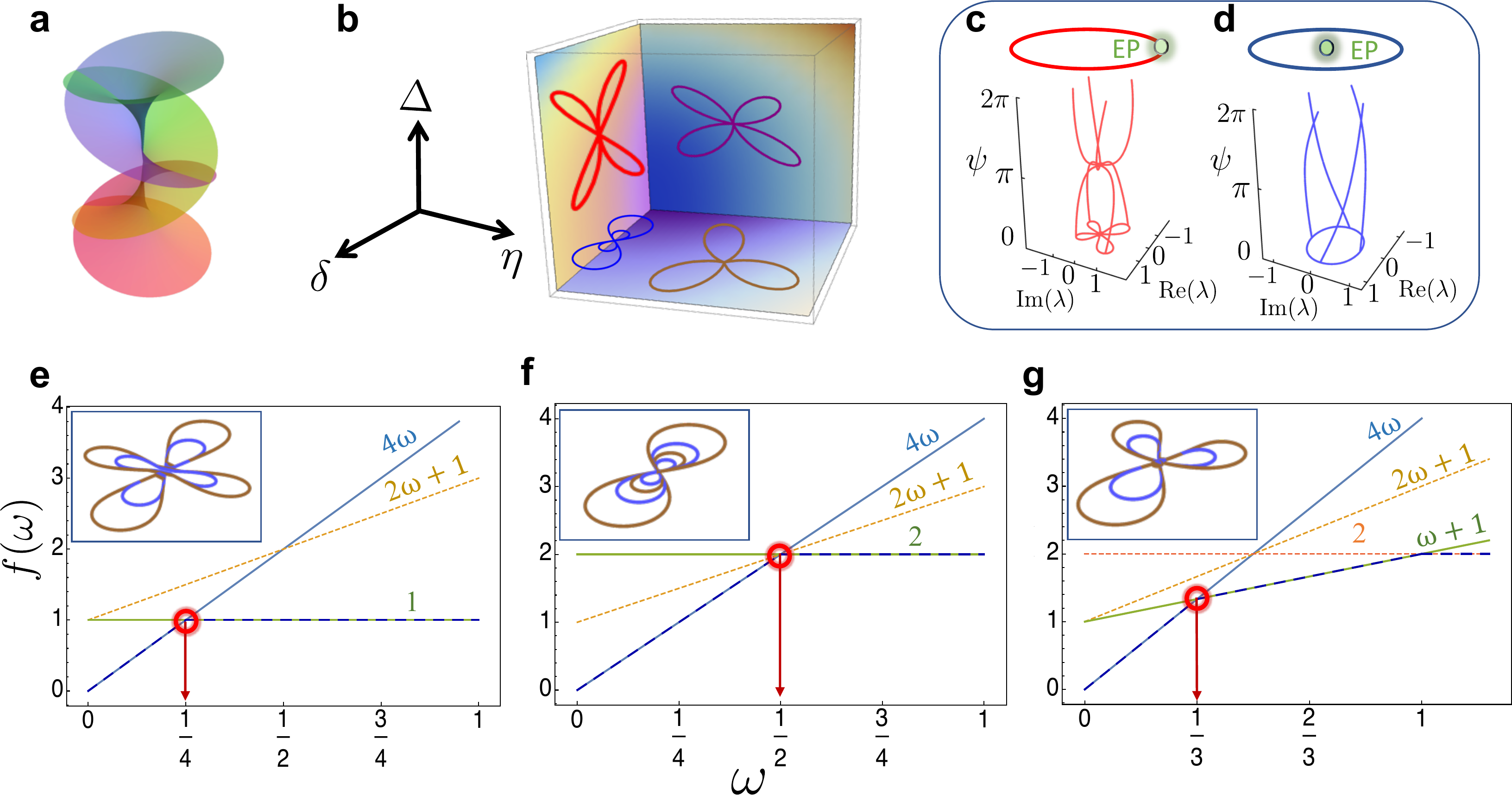}
   \caption{\textbf{Holonomy characterization of disordered Hatano-Nelson model.} (a) Riemann surface for a quartic root in the complex plane. (b) The Hatano-Nelson model exhibits anisotropic exceptional behaviour in the parameter space as illustrated by the different projection of eigenbands along different parameter planes (the number of petals representing the order of EP) (c), (d) Swapping of eigenmodes arising from Riemann sheet exchange while tracing a loop in parameter space given by $R= c e^{i \psi}, \psi \in (0,2 \pi)$. We show the holonomy properties when the loop (c) critically touches EP and (d) encloses the EP. Note that in (d) the $N$ eigenmodes undergo a cyclic permutation among themselves while in (c) eigenmode evolution forms $N$ petals in the complex energy plane where $N$ is the order of EP.   Tropicalization and tropical roots showing (e) fourth- $(\theta=0, \phi=\pi/4)$, (f) second- $(\theta=0, \phi=0)$, and (g) third-order $(\theta=\pi/4,\phi=0)$ EPs for different values of $\phi$ and $\theta$. The insets show holonomy characterization in the presence (brown) and absence (blue) of disorder. Disorder preserves the stability of EPs but renormalizes the spectral properties.} \label{dis-1}
\end{figure*}

\subsection{Tropical analysis of the Su-Schrieffer-Heeger model} 

Below we apply the complete tropical approach developed in this work to the paradigmatic non-Hermitian SSH Hamiltonian with non-reciprocal hopping~\cite{lieu2018topological,herviou2019defining}. We demonstrate how our tropical geometric approach can detect the NHSE, which is a unique feature of non-Hermitian systems where a large number of states accumulate at boundaries of open systems~\cite{yao2018edge,weidemann2020topological}. The Hamiltonian of the non-Hermitian SSH model reads

\begin{align}
H_{SSH} =  - & \sum_{i} [t_1 (c^\dagger_{i,A} c_{i,B} + h.c.) +
t_2 (c^\dagger_{i+1,A} c_{i,B} + h.c.)]  \notag \\ +& \sum_{i}
\gamma (c^\dagger_{i,B}c_{i,A} - c^\dagger_{i,A}c_{i,B}),
\label{eqn:Ham}
\end{align}

where $c^{\dagger}_{i,\alpha} (c_{i,\alpha})$ is the fermionic creation (annihilation) operator at site $i$ for sublattice $\alpha=A,B$. The intra- and inter-unit cell hopping amplitudes are given by $t_1$ and $t_2$, respectively, and $\gamma$ introduces a non-reciprocity only in the intra-unit cell hopping, resulting in non-Hermiticity (see Fig.~\ref{ssh-figure}a). We introduce a perturbation, $\sigma t_2$, $\sigma \in [0,1]$, which connects the last and first sites. The Hamiltonian takes the matrix form ($\epsilon = \sigma t_2$)

\begin{equation}
    H_{SSH}(\epsilon)=
    \begin{bmatrix}
    0 & t-\gamma & 0 & \cdots & \epsilon \\
    t+\gamma & 0 & t_2 & \cdots & 0 \\
    0 & t_2 & 0 & \cdots & 0\\
    \vdots & \vdots & \vdots & \ddots & \vdots \\
    \end{bmatrix}_{N \times N}.
\label{eq:HN_model_pert}
\end{equation}

The characteristic equation, in turn, is 

\begin{equation}
\begin{split}
p(\epsilon,\lambda) & = \text{mod}(N-1,2)\{\gamma^2-t_1^2\}^{\frac{N}{2}} - t_2^{\frac{N-2}{2}}  (t_1+\gamma)^{N/2} \epsilon \\
  & +\sum_{M=N,N-2,\cdots} [z_M\{\gamma^2-t_1^2\}^{\frac{N-M}{2}}+t_2 \mathcal{O}_M]\lambda^{M},
\end{split}
\label{ssh-characteristics}
\end{equation}

where each $z_M$ is a constant, $z_M\in \mathbb{Z}$. The tropical polynomial is calculated to be

\begin{equation}
\text{trop} \left( p(\epsilon,\lambda) \right) (\omega) = \text{min}\{m, \cdots (N-2) \omega, N \omega \},
\label{ssh-tropicalization}
\end{equation}

where $m=$ 0 (1) for even (odd) sites. The tropicalization and bend locus for $p(\epsilon,\lambda)$ are shown in Fig.~\ref{ssh-figure}d and \ref{ssh-figure}e. Strikingly at $t_1=\gamma$ and $t_2\rightarrow 0$, the coefficients of all the terms in $p(\epsilon,\lambda)$ vanish other than the $\epsilon^1$ and the $\lambda^N$ terms which lead to the solution $\lambda = \epsilon^{1/N}$. This fractional exponent, in turn, shows that higher-order EPs appear for $t_1 = \pm \gamma$ with an algebraic multiplicity that scales with system size while the geometric multiplicity remains unity. This is a signature of the NHSE, wherein all the bulk modes collapse to one state and are exponentially localized at the edge under open boundary conditions.

This physics can be beautifully captured by the amoebas and their corresponding Newton polygons. In Fig.~\ref{ssh-figure}b and \ref{ssh-figure}c, we present the Newton polygon and associated amoeba for this model. The edges of the Newton polygon are perpendicular to the tentacles of the amoeba. The structure of the amoeba remains invariant unless we have $t_1=\gamma$, where the amoeba and the corresponding Newton polygon strikingly collapse to a single straight line, as shown in Fig.~\ref{ssh-figure}c. The Newton polygon in the latter case has a slope of $1/N$, establishing the presence of an EP-$N$.

\subsection{Application to disorder and holonomy}

Our tropical geometric framework can be used to extract universal properties of EPs even in the presence of disorder as we show next. To illustrate, we consider the celebrated Hatano-Nelson model~\cite{hatano1996localization} under open boundary conditions with $N$ sites, along with upper corner perturbations, i.e., additional couplings in the $(1,j)$-th entries of the Hamiltonian, where $j=N, N-2 \cdots$. For $4$ sites, the Hamiltonian reads

\begin{equation}
 H_4 =  \begin{pmatrix}
    0 & \delta & \eta & \Delta \\
    (2+\delta) & 0 & \delta & 0 \\
    0 &  (2+\delta) & 0 & \delta \\
    0 & 0 &  (2+\delta) & 0 \\
    \end{pmatrix}.
\end{equation}

The Hamiltonian can be written as a sum of two companion matrices indicating that it features different exceptional behaviour along different sections of the parameter space. To study the structure of EPs in the parameter space of $\delta$, $\Delta$ and $\eta$, we shift to generalized spherical coordinates $\delta= r\cos{\theta}\cos{\phi}$, $\Delta=r\cos{\theta}\sin{\phi}$, $\eta=r\sin{\theta}$ and study the tropicalization of the characteristic equation $p(r,\lambda)$. We find highly anisotropic behaviour resulting in EP-2, EP-3 or EP-4 along various directions, as summarized in Fig.~\ref{dis-1}. Please refer to the supplement for a more detailed analysis~\cite{supplement}.

Here, we will use the $N=4$ case as an example to show that the exceptional behaviour in Hatano-Nelson model remains universal even in the presence of certain kinds of disorder, and the disordered Hamiltonians are homotopic to each other with respect to the tropicalization. The Hamiltonian, $H_4$, in the presence of a general form of scaling disorder reads

\begin{equation}
  H_{dis}=  \begin{pmatrix}
    0 & a\delta & \eta & \Delta \\
    (2+\delta)c & 0 & \delta b & 0 \\
    0 &  (2+\delta)d & 0 & \delta m \\
    0 & 0 &  (2+\delta)n & 0 \\
    \end{pmatrix},
\end{equation}

where $a, b, c, m$ and $n$ are arbitrary real numbers that introduce disorder in the asymmetric hopping terms. Such models are well-studied and can be experimentally realized in different physical settings~\cite{xiao2019anisotropic}. The form of the characteristic equation now changes, but remarkably, its tropicalization remains the same as for $H_4$.

\begin{equation}
\text{trop} \left( p(r,\lambda)(\omega) \right) = \text{min}(4\omega, 2\omega+1, \omega+1,1),
\end{equation}

for $\cos{\theta}, \sin{\theta}, \cos{\phi}, \sin{\phi} \neq 0$. The tropical polynomial remains invariant to the values of disorder scaling parameters suggesting the exceptional behaviour remains invariant, or is universal in the presence of disorder. Our framework makes this apparent through the tropicalization. A complementary view is to analyze the holonomy around the EPs. Consider varying some system parameters to form a loop in the parameter space while simultaneously tracing the evolution of the complex eigenmodes. If the loop encloses an EP-$N$, $N$ eigenmodes would undergo a cyclic permutation, which can be understood using holonomy matrices~\cite{mehri2008geometric,ryu2012analysis}. Whereas, if the loop marginally touches an EP-$N$, the projection of the eigenmode evolution forms $N$ petals in the complex energy plane, as shown in Fig.~\ref{dis-1}c. We used such a marginally touching loops to study the holonomy properties for $H_{dis}$. We find that in the presence of disorder, the eigenvalues get scaled, however their holonomy properties do not change, as shown in insets of Fig.~\ref{dis-1}e-g. As the tropicalization remains invariant, the set of disordered Hamiltonians are homotopically connected and the EPs are universal.

\section{Outlook}
Our work opens up several avenues for exploration. While we have formulated the tropical geometric framework for a single variable, we envisage that it should be possible to generalize this to several variables -- this will allow treating multiple perturbations on the same footing. It will be interesting to use our approach to classify the different non-Hermitian symmetry classes, and explore potential connections of tropical geometry to $K$ theory~\cite{gong2018topological}. Since our approach allows treating disorder in a natural way, it could be interesting to connect tropical geometry and random matrices, which have applications in many different fields of physics~\cite{guhr1998random}. We also expect our analytical approach to be practically useful for tuning to EPs and identifying conditions for NHSE in various experimental arenas. Finally, we note that, very recently, amoebas have been used to determine the generalized Brillouin zone for non-Hermitian systems~\cite{wang2022amoeba}.
In summary, we have introduced and developed a new framework to characterize EPs using tropical geometry. We have illustrated its implications using paradigmatic SSH and Hatano-Nelson models. Our work, bridging the fields of tropical geometry and non-Hermitian phenomena, is particularly timely given the surge of interest in non-Hermitian systems. We hope that our findings motivate further synergy between mathematics and non-Hermitian physics.

\subsection{Acknowledgments}
A.B. thanks the Prime Minister's Research Fellowship. R.J. thanks the Kishore Vaigyanik Protsahan Yojana fellowship. M.M was supported by a MATRICS grant from the Department of Science and Technology (DST) India. A.N. is supported by the startup grant of the Indian Institute of Science (SG/MHRD-19-0001) and by DST-SERB (project number SRG/2020/000153).
A part of this work was carried out during the program ``Combinatorial Algebraic Geometry: Tropical and Real" held at the International Centre for Theoretical Sciences, Bangalore (ICTS) in June-July, 2022. We thank ICTS for their warm hospitality. 
We thank Vamsi P. Pingali for bringing us together on this project. We thank G. Reddy and N. Aetukuri for feedback on the manuscript.
MM also thanks Indian Institute of Science, Bangalore and Waseda University, Tokyo for their hospitality during the visits in May, 2022 and January, 2023, respectively, in which a part of the work was carried out. 

\subsection{Author contributions}
A.B. and R.J. carried out the calculations in consultation with M.M. and A.N. All authors analyzed the results, developed the theory, and wrote the manuscript. 

\section{Methods}
\setlength{\parskip}{0\baselineskip}

\textbf{Fundamentals of tropical geometry.} Here we summarize some of the fundamentals of tropical geometry. The algebraic structure of tropical geometry is also known as the min-plus algebra. Many of the usual axioms of arithmetic remain valid in the tropical setting. For instance, addition and multiplication are commutative

\begin{equation}
    x \oplus y=y \oplus x, \quad x \odot y=y \odot x.
\end{equation}

Associative property also holds, as does the distributive law

\begin{equation}
    x \odot (y \oplus z)=x \odot y \oplus x \odot z.
\end{equation}

Both tropical operations have an identity element -- infinity for addition and zero for multiplication.

\begin{equation}
    x \oplus \infty = x, \quad  x \odot 0 = x.
\end{equation}

A distinct feature of tropical arithmetic is the absence of subtraction operation. On the other hand, tropical division is the classical subtraction. So, $(\mathbb{R}\cup \{\infty\},\oplus,\odot)$ satisfies all the ring axioms except for the existence of additive inverse -- such algebraic structures are termed \emph{semirings}.

\textbf{Newton polygon formalism.} We briefly discuss the Newton polygon formalism which is dual to amoebas. Let us start with the Puiseux series solution of the equation $f(x,y)=0$ in a suitable neighbourhood of the origin (in our case an EP ($\nu=0$)). Any polynomial $f(x,y)$ with the form

\begin{equation}
f(x,y)= \sum_{\eta,\zeta} a_{\eta \zeta} x^{\eta} y^{\zeta},
    \label{s1}
\end{equation}

admits a solution $y=t x^{\mu}$, where $t$ is a complex number and $\mu=p/q$ is a positive rational number. One can find a solution by substituting $y=t x^{\mu}$ in Eq.~\ref{s1}, to obtain

\begin{equation}
f(x,tx^{\mu}) = \sum_{\eta,\zeta} a_{\eta \zeta}x^{\eta+ \mu \zeta} t^{\zeta}
                    = x^{\xi} \sum_{\eta,\zeta} a_{\eta \zeta} t^{\zeta}.
\end{equation}

The above equation puts a constraint that $f(x,y)$ contains only monomials $x^{\eta} y^{\zeta}$ for which $\eta + \mu \zeta =\xi$, which is the essential feature of the Newton polygon. The geometric interpretation of Newton polygon is embedded in the following mapping. Each monomial $x^{\eta}y^{\zeta}$ maps to the pair $(\eta,\zeta)$ of natural numbers comprising a set of $\mathbb{N}^2$ lattice points with integer coordinates for non-zero coefficients $a_{\eta \zeta}\neq 0$. This set of lattice points forms the carrier $\Delta (f)$ of $f$, thus

\begin{equation}
    \Delta (f) = \{(\eta, \zeta) \in \mathbb{N}^2|a_{\eta \zeta}\neq 0\}.
\end{equation}

For a convergent power series $f(x,y)$ with a carrier $\Delta \left( f \right)$, one can define a convex hull from each point of the carrier $\Delta \left( f \right)$. The boundary of the convex hull, delineating a compact polygonal path, gives the Newton polygon of $f$. The steepest segment of the Newton polygon gives the lowest order term for the Puiseux series solution, thus defining the order of EP~\cite{brieskorn2012plane,jaiswal2021characterizing}. More concretely, the condition $\eta + \mu \zeta =\xi$ for all $(\eta,\zeta) \in \Delta (f) $ indicates that all points of $\Delta (f)$ lie on a line, with a slope $-\frac{1}{\mu}$, and the line meets the $\alpha-$axis at $\eta=\xi$.

\textbf{Amoeba and tropicalization.} We next present the connection between amoeba and tropicalization as used in the main text. The absolute value $|.|$ over the complex numbers satisfies the archimedean property~\cite[Chapter 9, page 313]{cohn2012basic}. Any field $F$ has an absolute value $|.|_t$ that is non-archimedean, i.e., does not satisfy the archimedean property: $|0_F|_t=0$ and $|c|_t=1$ for all $c \neq 0_F$ in $F$, where $0_F$ is the additive identity of $F$. This is usually called the \emph{trivial absolute value} on $F$. Otherwise, the non-archimedean absolute value is called \emph{non-trivial}.  Fields such as the rational numbers $\mathbb{Q}$, the field $\mathbb{C}((t))$ of formal Laurent power series in one variable (with complex coefficients) are naturally equipped with non-trivial (non-archimedean) absolute values. More explicitly, i. $|n|_p:=e^{-{\rm val}_p(n)}$ for any $n \in \mathbb{Q}$ where $p$ is a prime, ${\rm val}_p(n)$ is ${\rm ord}_p(r)-{\rm ord}_p(s)$ where $n=r/s$ such that $r, s \in \mathbb{Z}$, $s \neq 0$ and ${\rm ord}_p(i)$, for an integer $i$, is the largest power of $p$ that divides $i$, ii. $|\ell(z)|$ for a Laurent power series $\ell(z)$ is defined as $e^{-{\rm ord}(\ell(z))}$ where ${\rm ord}(\ell(z))$ is the least exponent of $z$ in the support of $\ell$. The rational number ${\rm ord}(\ell(z))$ is also called the \emph{valuation} of $\ell(z)$ and is denoted by ${\rm val}(\ell(z))$.

Suppose that $\mathbb{K}$ is an algebraically closed field (every polynomial of degree at least one in $\mathbb{K}[t]$ has a root) equipped with a non-trivial, non-archimedean absolute value $|.|_{\mathbb{K}}$.  Our primary example of such a field is the field  $\mathbb{C}\{\{t\}\}$ of Puiseux series is one variable. The notion of amoeba that we defined over the complex numbers can also be mimicked over $\mathbb{K}$ as follows. Suppose that $V \subseteq (\mathbb{K}^{\star})^n$ is the set of solutions, all of whose coordinates are non-zero, to a finite set of Laurent polynomials in $n$-variables with coefficients in $\mathbb{K}$. Let ${\rm Log}_{\mathbb{K}} :(\mathbb{K}^{\star})^n \rightarrow \mathbb{R}^n$ be the map ${{\rm Log}_{\mathbb{K}}}(s) = -\log |s|_{\mathbb{K}}$ (note that $s \neq 0_{\mathbb{K}}$) \footnote{Note that the negative sign in the definition of ${{\rm Log}_{\mathbb{K}}}$ is only for compatibility with the associated valuation and does not cause any essential change}. The tropicalization of $V$ is defined as the image of the map ${{\rm Log}_{\mathbb{K}}}$ restricted to $V$. Hence, the tropicalization of $V$ is a non-archimedean analogue of an amoeba.

\bibliography{references}

\end{document}